\theoremstyle{plain}
\newtheorem{thm}{Theorem}[section]
\newtheorem{dfn}[thm]{Definition}
\begin{document}
\title{Necessary and sufficient condition for constructing\\
a single qudit insertion/deletion code\\
 and its decoding algorithm}

\author{
Taro Shibayama \thanks{
Kaijo Junior and Senior High School, 
3-6-1 Okubo, Shinjuku, Tokyo, Japan, 169-0072. 
E-mail: shibayama@kaijo.ed.jp
}
}

\date{}
\maketitle

\begin{abstract}
This paper shows that Knill-Laflamme condition, known as a necessary and sufficient condition for quantum error-correction, can be applied to quantum errors where the number of particles changes before and after the error.
This fact shows that correctabilities of single deletion errors and single insertion errors are equivalent.
By applying Knill-Laflamme condition, we generalize the previously known correction conditions for single insertion and deletion errors to necessary and sufficient level.
By giving an example that satisfies this condition, we construct a new single qudit insertion/deletion code and explain its decoding algorithm.
\end{abstract}

\allowdisplaybreaks

\section{Introduction}\label{sec1}
Quantum error-correcting codes play an important role in quantum information theory and have been actively studied since the 1990s\cite{Peter1995,Robert1996,Gottesman1997}.
The necessary and sufficient condition for quantum error-correction (KL condition) given by Knill-Laflamme in 1997\cite{knill1997} is extremely useful.
Quantum errors in KL condition are mainly assumed to be those in which the number of particles does not change before and after the error, such as errors represented by unitary matrices.
Recently, however, quantum errors where the number of particles changes, such as quantum insertion/deletion errors, have been attracting attention.
Since 2020, several examples of quantum insertion/deletion error-correcting codes have been reported\cite{Nakayama20201,Hagiwara20201,shibayama2021construction,9518078,9517870,ManabuHagiwara20212020XBL0191,nakamura2024decodingalgorithmcorrectingsingleinsertion}, and applications of quantum deletion codes have also been reported recently\cite{aoki2024}.

Classical insertion/deletion error-correcting codes were first given by Levenstein in 1966\cite{Levenshtein1966} and have been actively studied in recent years since then\cite{971760,TiloLeonid2013,8849583,9834350,9770830}.
The most important property of classical codes is the equivalence between the error-correctability of insertion and deletion errors\cite{Levenshtein1966,971760}.
Although it has been an open problem whether the equivalence holds in quantum theory, it has recently been shown that KL condition for deletions is equivalent to that for insertions of separable states\cite{9611450}.
In this paper, we first prove that KL condition can be used as a necessary and sufficient condition for error-correction even when the number of particles changes before and after the error.

By using KL conditions, the single deletion error-correction condition by Nakayama-Hagiwara\cite{Nakayama20202} and the single insertion error-correction condition by Shibayama-Hagiwara\cite{9834635} are improved to reach the necessary and sufficient level.
These are also greatly generalized in that they give construction conditions not only for binary codes but also for non-binary codes.
We also construct a new single qudit insertion/deletion code by giving an example that satisfies these conditions, and explain its decoding algorithm in detail.

This paper is organized as follows.
Section \ref{sec2} describes the variables and notations used in this paper.
Section \ref{sec3} explains that KL condition is also available for errors that change the number of particles.
In Section \ref{sec4}, we define single deletion and single insertion errors and show that their correctabilities are equivalent.
Section \ref{sec5} gives necessary and sufficient conditions for the correction of single insertion/deletion errors.
In Section \ref{sec6}, we give an example of the code and explain its decoding algorithm.
Finally, Section \ref{sec7} summarizes.

\section{Preliminaries}\label{sec2}
The symbols and notations defined in this section will be used throughout this paper.
Let $n$ be a positive integer and $[n]:=\{1,2,\dots,n\}$.
We denote by $\mathbb{C}^l$ the $l$-dimensional vector space over a complex field $\mathbb{C}$ for an integer $l\geq2$.
Let $|0\rangle,|1\rangle,|2\rangle,\dots,|l-1\rangle$ be the standard orthogonal basis of $\mathbb{C}^l$ and $\mathcal{L}:=\{0,1,2,\dots,l-1\}$.
Set $|\bm{x}\rangle:=|x_1\rangle\otimes|x_2\rangle\otimes\dots\otimes|x_n\rangle\in\mathbb{C}^{l\otimes n}$ for an $n$-tuple $\bm{x}=x_1x_2\dots x_n\in\mathcal{L}^n$.
Here $\otimes$ is the tensor product operation.
Let $\langle\bm{x}|:=|\bm{x}\rangle^\dag$ denote the conjugate transpose of $|\bm{x}\rangle$.
A positive semi-definite Hermitian matrix of trace $1$ is called a density matrix.
We denote by $S(\mathbb{C}^{l\otimes n})$ the set of all density matrices of order $l^n$.
An element of $S(\mathbb{C}^{l\otimes n})$ is called an $n$-qudit quantum state.
In this paper, we also use a complex vector $|\phi\rangle\in \mathbb{C}^{l\otimes n}$ for representing a pure state $|\phi\rangle\langle\phi|\in S(\mathbb{C}^{l\otimes n})$.

For vectors $|0_L\rangle,|1_L\rangle,|2_L\rangle,\dots,|l-1_L\rangle\in\mathbb{C}^{l\otimes n}$ of length $1$ orthogonal to each other, the image of a linear map ${\rm Enc}$ that satisfies
\begin{align*}
|\psi\rangle&:=\sum_{i\in\mathcal{L}}\alpha_i|i\rangle\in\mathbb{C}^l,\\
|\Psi\rangle&:=\sum_{i\in\mathcal{L}}\alpha_i|i_L\rangle\in\mathbb{C}^{l\otimes n},
\end{align*}
and ${\rm Enc}(|\psi\rangle)=|\Psi\rangle$ is called a quantum code, and the map ${\rm Enc}$ is called the encoder for the code.
Here, vectors $|0_L\rangle,|1_L\rangle,|2_L\rangle,\dots,|l-1_L\rangle\in\mathbb{C}^{l\otimes n}$ are collectively called logical codewords.

\begin{dfn}[Quantum error]\label{dfn2_1}
For positive integers $n,n'$, take a set $\mathcal{A}=\{A_a\}$ of $l^{n'}$-by-$l^n$ matrices such that $\sum_{a}A_a^\dag A_a$ is the identity operator on $S(\mathbb{C}^{l\otimes n})$.
The map $E:S(\mathbb{C}^{l\otimes n})\rightarrow S(\mathbb{C}^{l\otimes n'})$ defined by
\begin{align*}
E(\rho):=\sum_{a}A_a\rho A_a^\dag
\end{align*}
is called a quantum error.
Here, $\mathcal{A}$ is called a Kraus set for the error $E$ and an element of $\mathcal{A}$ is called a Kraus operator.
\end{dfn}

If $n=n'$ and a linear map $E$ is a quantum channel, that is, completely positive and trace-preserving, then $E$ is a quantum error\cite{kraus1983states}.
Definition \ref{dfn2_1} extends the ordinary definition of quantum error to include the case of $n\neq n'$.
This allows the case where the number of particles changes before and after an error, such as quantum insertion or quantum deletion, to be considered a quantum error.

If $|\psi\rangle\in S(\mathbb{C}^l)$ can be obtained by the operations allowed by quantum mechanics on $E({\rm Enc}(|\psi\rangle))\in S(\mathbb{C}^{l\otimes n'})$, then the quantum code $C$ is correctable for the quantum error $E$.
In this paper, the quantum code $C$ is called $\mathcal{A}$-correcting when $|0\dots0\rangle\otimes|\psi\rangle\in S(\mathbb{C}^{l\otimes n'})$ can be obtained by a finite number of measurements and unitary transformations.
When $C$ is an $\mathcal{A}$-correcting code, the original state $|\psi\rangle\in S(\mathbb{C}^l)$ can be obtained by deleting the $1$st through the $(n'-1)$th particles.
The decoding in this paper uses the measurement described by the measurement operator $\mathcal{M}=\{M_k\}$ that satisfies the completeness relation $\sum_{k}M_k^\dag M_k=I$.
If we perform a measurement $\mathcal{M}$ under a state $\rho\in S(\mathbb{C}^{l\otimes n'})$, the probability to get an outcome $k$ is $p(k)={\rm Tr}(M_k^\dag M_k\rho)$ and the after measurement state is $M_k\rho M_k^\dag/p(k)$.
Here, we denote the sum of the diagonal elements of the square matrix $M$ by ${\rm Tr}(M)$ and the identity operator by $I$.

\section{Correctable condition for quantum errors with non-square Kraus operators}\label{sec3}

This paper extends the recovery superoperator given in Ref.\cite{knill1997} to apply to non-square Kraus operators.
\begin{dfn}[Recovery superoperator]\label{dfn2_2}
The set $\mathcal{R}=\{R_r\}$ of square matrices of order $l^{n'}$ such that the following two conditions are satisfied is called a recovery superoperator of $(C,\mathcal{A})$.
\begin{itemize}
\item For any $i\in\mathcal{L}$, $R_r\in\mathcal{R}$, $A_a\in\mathcal{A}$, there exists $\lambda_{r,a}\in\mathbb{C}$ such that 
\begin{align*}
R_rA_a|i_L\rangle=\lambda_{r,a}|0\dots0i\rangle.
\end{align*}
\item $\mathcal{R}$ satisfies the completeness relation $\sum_{r}R_r^\dag R_r=I$.
\end{itemize}
\end{dfn}

In Theorem \ref{thm3_1}, the case $n=n'$ is a well-known fact, and the 3rd condition in the theorem is known as KL condition\cite{knill1997}.
Here we extend it and claim that it also holds for the case of non-square Kraus operators.

\begin{thm}\label{thm3_1}
For a quantum code $C$ with logical codewords $|0_L\rangle,|1_L\rangle,|2_L\rangle,\dots,|l-1_L\rangle\in\mathbb{C}^{l\otimes n}$ and a quantum error $E:S(\mathbb{C}^{l\otimes n})\rightarrow S(\mathbb{C}^{l\otimes n'})$ with the Kraus set $\mathcal{A}$, the following three conditions are equivalent to each other:
\begin{itemize}
\item[${\it 1.}$] The quantum code $C$ is $\mathcal{A}$-correcting.
\item[${\it 2.}$] There exists a recovery superoperator of $(C,\mathcal{A})$.
\item[${\it 3.}$] (KL condition) For any $A_a,A_b\in\mathcal{A}$, and any $i,j\in\mathcal{L}$, there exists $\mu_{a,b}\in\mathbb{C}$ such that
\begin{align*}
\langle i_L|A_a^\dag A_b|j_L\rangle=\mu_{a,b}\delta_{i,j},
\end{align*}
where $\delta_{i,j}$ denotes the Kronecker delta function.
\end{itemize}
\end{thm}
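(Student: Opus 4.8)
The plan is to prove the equivalence by establishing the cycle $1 \Rightarrow 3 \Rightarrow 2 \Rightarrow 1$, adapting the classical Knill--Laflamme argument to the non-square setting. The key observation that makes this work is that although the Kraus operators $A_a$ map $\mathbb{C}^{l\otimes n}$ to $\mathbb{C}^{l\otimes n'}$ with $n \neq n'$ possibly, the inner products $\langle i_L | A_a^\dag A_b | j_L \rangle$ and the recovery operators $R_r$ still live in well-defined spaces (scalars, and operators on $\mathbb{C}^{l\otimes n'}$ respectively), so the algebraic skeleton of the proof is insensitive to the mismatch in particle number.

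For $2 \Rightarrow 1$: given a recovery superoperator $\mathcal{R} = \{R_r\}$, I would take the measurement operators $M_r := R_r$ (legitimate since $\sum_r R_r^\dag R_r = I$) and apply them to the corrupted state $E(\mathrm{Enc}(|\psi\rangle))$. Writing $|\Psi\rangle = \sum_i \alpha_i |i_L\rangle$, for each Kraus index $a$ we get $R_r A_a |\Psi\rangle = \bigl(\sum_i \alpha_i \lambda_{r,a}\bigr)|0\dots0\rangle\otimes|\psi\rangle$; wait — more carefully, $R_r A_a |\Psi\rangle = \sum_i \alpha_i \lambda_{r,a} |0\dots0 i\rangle = \lambda_{r,a} |0\dots 0\rangle \otimes |\psi\rangle$, so the post-measurement state is proportional to $|0\dots0\rangle\otimes|\psi\rangle\langle\psi|\otimes$ (identity on the ancilla register summed over $a$), which up to the fixed normalization is exactly the target state $|0\dots0\rangle\otimes|\psi\rangle$. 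Hence $C$ is $\mathcal{A}$-correcting. This direction is essentially bookkeeping.

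For $1 \Rightarrow 3$: this is the genuine direction. I would use the fact that a correcting procedure is a sequence of measurements and unitaries, which composes to a single quantum channel (CPTP map) $\mathcal{D}$ with Kraus operators $\{D_k\}$ such that $\mathcal{D}(E(\mathrm{Enc}(\rho))) = |0\dots0\rangle\otimes\rho$ for all code inputs. Composing Kraus sets, $\{D_k A_a\}$ is a Kraus representation of $\mathcal{D}\circ E$ restricted to the code space; applying it to $|i_L\rangle\langle j_L|$ and matching against $|0\dots0 i\rangle\langle 0\dots0 j|$ forces $D_k A_a |i_L\rangle = c_{k,a}|0\dots0 i\rangle$ for scalars $c_{k,a}$ independent of $i$ (the standard argument: the output must be the same pure state on the logical register regardless of the ancilla outcome, so each $D_k A_a$ acts as a rank-one-ish map sending every logical codeword to the corresponding computational state with a uniform coefficient). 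Then $\langle i_L| A_a^\dag A_b |j_L\rangle = \sum_k \langle i_L| A_a^\dag D_k^\dag D_k A_b|j_L\rangle$ — here I need $\sum_k D_k^\dag D_k = I$, so I should arrange $\mathcal{D}$ to be trace-preserving, or work with $\sum_k D_k^\dag D_k$ acting as identity on the relevant subspace — and this equals $\sum_k \overline{c_{k,a}} c_{k,b}\langle 0\dots0 i | 0\dots0 j\rangle = \mu_{a,b}\delta_{i,j}$ with $\mu_{a,b} := \sum_k \overline{c_{k,a}}c_{k,b}$. The main obstacle here is the usual technical care about whether $\sum_k D_k^\dag D_k$ is the full identity or only identity on a subspace containing the image of $E$ on the code; I expect to handle this exactly as in Knill--Laflamme by noting one can always pad the recovery channel to be genuinely trace-preserving without affecting its action on the relevant states.

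For $3 \Rightarrow 2$: this is the constructive core. Form the matrix $(\mu_{a,b})$, which is positive semi-definite Hermitian (it is a Gram-type matrix: $\mu_{a,b} = \langle i_L|A_a^\dag A_b|i_L\rangle$ for any fixed $i$, an inner product of the vectors $A_a|i_L\rangle$). Diagonalize it as $\mu = U^\dag \mathrm{diag}(d_r) U$ and pass to the rotated Kraus set $\tilde A_r := \sum_a U_{r,a} A_a$; then $\langle i_L|\tilde A_r^\dag \tilde A_s|j_L\rangle = d_r \delta_{r,s}\delta_{i,j}$, so distinct $\tilde A_r$ send the logical code space to mutually orthogonal subspaces, and within each, $\tilde A_r$ acts (up to the scalar $\sqrt{d_r}$) as an isometry carrying $|i_L\rangle$ to an orthonormal set $|v^{(r)}_i\rangle \in \mathbb{C}^{l\otimes n'}$. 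I would then define $R_r$ to be (a scalar multiple of) the partial isometry mapping $|v^{(r)}_i\rangle \mapsto |0\dots0 i\rangle$ and zero on the orthogonal complement of $\mathrm{span}\{|v^{(r)}_i\rangle\}$; the completeness relation $\sum_r R_r^\dag R_r = I$ is then arranged by adding one extra operator $R_\perp$ projecting onto whatever subspace of $\mathbb{C}^{l\otimes n'}$ is left over (and checking $\sum_r R_r^\dag R_r \le I$ first, which follows from the orthogonality of the images and $d_r \le$ appropriate bounds coming from $\sum_a A_a^\dag A_a = I$ on the code space). One verifies $R_r \tilde A_s |i_L\rangle = \sqrt{d_s}\,\delta_{r,s}|0\dots0 i\rangle$ and then translates back to the original Kraus set $\mathcal{A}$ via the inverse rotation $U^{-1} = U^\dag$, yielding the $\lambda_{r,a}$ of Definition~\ref{dfn2_2}. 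The main delicacy in this step is getting the completeness relation exactly right — ensuring the leftover projector is well-defined and that $\sum_r R_r^\dag R_r$ genuinely equals $I$ and not merely a sub-identity — but this is the same padding issue as before and is routine once the orthogonality structure is in hand.
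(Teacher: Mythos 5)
Your proposal is correct in outline, but it routes the equivalence differently from the paper: you close the cycle as $1\Rightarrow3\Rightarrow2\Rightarrow1$, while the paper proves $1\Rightarrow2\Rightarrow3\Rightarrow1$. Your $1\Rightarrow3$ and $2\Rightarrow1$ are essentially rearrangements of the paper's Steps 1--2: the paper composes the decoding measurements and unitaries into operators $R_{\bm{k}}=U_mM^m_{k_m}\cdots U_0$ and then inserts $\sum_r R_r^\dag R_r=I$, which is exactly your composite-channel argument (your worry about trace preservation is moot here, since by the paper's definition decoding is a finite sequence of complete measurements and unitaries, so the composite Kraus set is automatically complete and no padding is needed); your $2\Rightarrow1$, using the $R_r$ themselves as a single measurement, is a short extra step the paper's ordering avoids, and it is fine. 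The genuinely different piece is $3\Rightarrow2$: you follow the original Knill--Laflamme route (diagonalize the Gram-type matrix $(\mu_{a,b})$, rotate the Kraus set, define partial isometries plus a leftover operator), whereas the paper proves $3\Rightarrow1$ directly via a Gram--Schmidt orthonormalization of $\{A_a|i_L\rangle\}$ whose coefficients are shown to be independent of $i$, producing an explicit measurement $\{M_k\}\cup\{M_\emptyset\}$ and unitaries $U_k$; that explicit construction is what the paper reuses as the decoding algorithm in Section \ref{sec6}, which your more abstract argument does not yield. Two details to tighten in your version: (i) the ``scalar multiple'' of the partial isometry $R_r$ must be a phase, or else $\sum_r R_r^\dag R_r$ fails to be a sum of orthogonal projectors and completeness breaks, and you should check that the leftover operator also satisfies Definition \ref{dfn2_2} --- it does, because every $A_a|i_L\rangle$ lies in the span of the $|v_i^{(r)}\rangle$, so the leftover operator sends it to $0$ (i.e.\ $\lambda=0$); (ii) diagonalizing $(\mu_{a,b})$ tacitly assumes a finite (or at most countable) Kraus set, whereas the paper's Gram--Schmidt step is explicitly arranged to handle infinite, even uncountable, $\mathcal{A}$ by selecting finitely many Kraus operators spanning the finite-dimensional space $V^i$.
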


In the case of non-square Kraus operators, it may not be possible to correct to the post-encoding state because the number of particles has changed, but it is possible to correct to the pre-encoding state.
The concept of the proof of Theorem \ref{thm3_1} is based on the one by Knill-Laflamme\cite{knill1997}, but needs some modifications in this sense.

\begin{proof}
This proof consists of three steps.

\noindent \textit{(Step 1)}
First, we prove that ${\it 1}\Rightarrow{\it 2}$.

Assume that the outcomes obtained in the decoding process are in turn $k_1,k_2,\dots,k_m$, and the operators corresponding to them are in turn
\begin{align*}
U_0,\mathcal{M}^1,U_1,\mathcal{M}^2,U_2,\dots,\mathcal{M}^m,U_m,
\end{align*}
where, $U_0,U_1,U_2,\dots,U_m$ are unitary operators and $\mathcal{M}^1=\{M^1_{k_1}\},\mathcal{M}^2=\{M^2_{k_2}\},\dots,\mathcal{M}^m=\{M^m_{k_m}\}$ are measurement operators.
Since the product of unitary matrices is a unitary matrix and the identity matrix is a unitary matrix, we may assume that the measurement and unitary transformation are performed alternately as described above.
Note that the operator used may vary depending on the outcomes obtained each time.
In this case, define $R_{\bm{k}}=U_{m}M_{k_m}^m\dots U_2M_{k_2}^2U_1M_{k_1}^1U_0$ with $\bm{k}=k_1k_2\dots k_m$, we can show that $\mathcal{R}=\{R_{\bm{k}}\}$ is a recovery superoperator.
Since 
\begin{align*}
\lefteqn{R_{\bm{k}}\left(\sum_aA_a\left(\sum_{i\in\mathcal{L}}\alpha_i|i_L\rangle\right)\left(\sum_{i\in\mathcal{L}}\overline{\alpha_i}\langle i_L|\right)A_a^\dag\right)R_{\bm{k}}^\dag}\\
&=\prod_{j=1}^mp_j(k_j)\left(\sum_{i\in\mathcal{L}}\alpha_i|0\dots0i\rangle\right)\left(\sum_{i\in\mathcal{L}}\overline{\alpha_i}\langle0\dots0i|\right)
\end{align*}
for any $|\psi\rangle=\sum_{i\in\mathcal{L}}\alpha_i|i\rangle\in\mathbb{C}^l$, we have
\begin{align*}
R_{\bm{k}}A_a|i_L\rangle=\sqrt{\prod_{j=1}^mp_j(k_j)}|0\dots0i\rangle,
\end{align*}
where $p_j(k_j)$ is the probability of obtaining the outcome $k_j$ in the measurement $\mathcal{M}^j$ for $1\leq j\leq m$.
We can also check that $\sum_{\bm{k}}R_{\bm{k}}^\dag R_{\bm{k}}=I$ by considering that $\sum_{M_k\in\mathcal{M}}M_k^\dag M_k=I$ for the same measurement $\mathcal{M}$ and $U^\dag U=I$ for any unitary matrix $U$.

\noindent \textit{(Step 2)}
Next, we prove that ${\it 2}\Rightarrow{\it 3}$.

For any $A_a,A_b\in\mathcal{A}$, and any $i,j\in\mathcal{L}$, we obtain
\begin{align*}
\langle i_L|A_a^\dag A_b|j_L\rangle
&=\langle i_L|A_a^\dag \left({\sum_rR_r^\dag R_r}\right)A_b|j_L\rangle\\
&=\sum_r\langle i_L|A_a^\dag R_r^\dag R_rA_b|j_L\rangle\\
&=\sum_r\overline{\lambda_{r,a}}\lambda_{r,b}\langle0\dots0i|0\dots0j\rangle\\
&=\left(\sum_r\overline{\lambda_{r,a}}\lambda_{r,b}\right)\delta_{i,j}.
\end{align*}
Thus, KL condition is satisfied.

\noindent \textit{(Step 3)}
Finally, we prove that ${\it 3}\Rightarrow{\it 1}$.

Fix $i\in\mathcal{L}$ and let $V^i\subset\mathbb{C}^{l^{n'}}$ be the vector space spanned by $A_a|i_L\rangle$ for all Kraus operators $A_a\in\mathcal{A}$.
We define the basis $\{|u_1^i\rangle,|u_2^i\rangle,\dots,|u_d^i\rangle\}$ of $V^i$ by applying the Gram-Schmidt orthonormalization as follows.
We define
\begin{align*}
|\tilde{u}_1^i\rangle&:=A_1|i_L\rangle,\\
|\tilde{u}_k^i\rangle&:=A_k|i_L\rangle-\sum_{p=1}^{k-1}\frac{\langle\tilde{u}_p^i|A_k|i_L\rangle}{\langle\tilde{u}_p^i|\tilde{u}_p^i\rangle}|\tilde{u}_p^i\rangle
\end{align*}
for $k\geq2$ to obtain vectors $|\tilde{u}_1^i\rangle, |\tilde{u}_2^i\rangle,\dots,|\tilde{u}_d^i\rangle\in V^i$ that are orthogonal to each other.
Note that even if the Kraus set $\mathcal{A}$ is an infinite set, the above inductive operation must finish because the dimension $d$ of $V^i$ is finite.
In addition, even if the Kraus set $\mathcal{A}$ is uncountable, the orthogonal vectors $|\tilde{u}_1^i\rangle, |\tilde{u}_2^i\rangle,\dots,|\tilde{u}_d^i\rangle\in V^i$ can be obtained by selecting the Kraus operators $A_1,A_2,\dots,A_d$ in any order such that $A_1|i_L\rangle,A_2|i_L\rangle,\dots,A_d|i_L\rangle$ are linearly independent.
By adjusting the lengths of all the obtained orthogonal vectors $|\tilde{u}_1^i\rangle, |\tilde{u}_2^i\rangle,\dots,|\tilde{u}_d^i\rangle\in V^i$ to $1$, the basis $\{|u_1^i\rangle,|u_2^i\rangle,\dots,|u_d^i\rangle\}$ of $V^i$ is obtained.
Note here that $d$ is independent of $i$.

From the definition of the basis $\{|u_1^i\rangle,|u_2^i\rangle,\dots,|u_d^i\rangle\}$, we can represent
\begin{align}
|u_k^i\rangle=c_{k,1}A_1|i_L\rangle+c_{k,2}A_2|i_L\rangle+\dots+c_{k,k}A_k|i_L\rangle\label{eq1}
\end{align}
for $k\in [d]$.
Note that $c_{k,1}, c_{k,2}, \dots, c_{k,k}\in\mathbb{C}$ do not depend on $i$.
This fact can be proved by mathematical induction as follows.
From $\langle\tilde{u}_1^i|\tilde{u}_1^i\rangle=\langle i_L|A_1^\dag A_1|i_L\rangle=\mu_{1,1}$ for any $i\in\mathcal{L}$,
\begin{align*}
|u_1^i\rangle=\frac{|\tilde{u}_1^i\rangle}{\||\tilde{u}_1^i\rangle\|}=\frac{A_1|i_L\rangle}{\sqrt{\mu_{1,1}}}
\end{align*}
holds and $\mu_{1,1}$ does not depend on $i$, thus it holds when $k=1$.
When $k\leq m$ for an integer $m\in [d-1]$, assume that
\begin{align*}
|\tilde{u}_k^i\rangle=c'_{k,1}A_1|i_L\rangle+c'_{k,2}A_2|i_L\rangle+\dots+c'_{k,k}A_k|i_L\rangle
\end{align*}
and $c'_{k,1}, c'_{k,2}, \dots, c'_{k,k}\in\mathbb{C}$ are all independent of $i$.
Then,
\begin{align*}
\langle\tilde{u}_k^i|\tilde{u}_k^i\rangle&=\sum_{p=1}^k\sum_{q=1}^k\overline{c'_{k,p}}c'_{k,q}\langle i_L|A_p^\dag A_q|i_L\rangle\\
&=\sum_{p=1}^k\sum_{q=1}^k\overline{c'_{k,p}}c'_{k,q}\mu_{p,q}
\end{align*}
and
\begin{align*}
\langle\tilde{u}_k^i|A_{m+1}|i_L\rangle&=\sum_{p=1}^k\overline{c'_{k,p}}\langle i_L|A_p^\dag A_{m+1}|i_L\rangle\\
&=\sum_{p=1}^k\overline{c'_{k,p}}\mu_{p,m+1}
\end{align*}
are both independent of $i$.
Therefore, every coefficient of
\begin{align*}
|\tilde{u}_{m+1}^i\rangle&=A_{m+1}|i_L\rangle-\sum_{p=1}^m\frac{\langle\tilde{u}_p^i|A_{m+1}|i_L\rangle}{\langle\tilde{u}_p^i|\tilde{u}_p^i\rangle}|\tilde{u}_p^i\rangle\\
&=A_{m+1}|i_L\rangle-\sum_{p=1}^m\left(\frac{\langle\tilde{u}_p^i|A_{m+1}|i_L\rangle}{\langle\tilde{u}_p^i|\tilde{u}_p^i\rangle}\sum_{q=1}^pc'_{p,q}A_q|i_L\rangle\right)
\end{align*}
does not depend on $i$.
Hence, all coefficients of $|u_{m+1}^i\rangle$ are also independent of $i$, and it is shown that it holds for $k=m+1$.

Define $M_k:=\sum_{i\in\mathcal{L}}|u_k^i\rangle\langle u_k^i|$ for $k\in [d]$.
For the space $V\subset\mathbb{C}^{l\otimes n'}$ with the basis $\{|u_k^i\rangle\mid k\in [d], i\in\mathcal{L}\}$, choose a basis $\{|e_j\rangle\}$ of its orthogonal complementary space $V^\perp$ and define $M_\emptyset:=\sum_{j}|e_j\rangle\langle e_j|$.
Then $\mathcal{M}:=\{M_{\emptyset}\}\cup \{M_{k}\mid k\in [d]\}$ is a set of measurement operators because the completeness relation
\begin{align*}
M_\emptyset^\dag M_\emptyset+\sum_{k\in [d]}M_k^\dag M_k=I
\end{align*}
is satisfied.
Here, $I$ is the identity matrix of order $l^{n'}$.

The measurement $\mathcal{M}$ is performed on the state after the quantum error $E$ of the encoded state $|\Psi\rangle\in S(\mathbb{C}^{l\otimes n})$.
Then, the probability to get the outcome $k\in [d]$ is 
\begin{align*}
p(k)&={\rm Tr}\left(M_k^\dag M_k\left(\sum_a A_a|\Psi\rangle\langle\Psi|A_a^\dag\right)\right)\\
&=\sum_{\bm{x}\in\mathcal{L}^{n'}}\sum_a\langle \bm{x}|M_k^\dag M_kA_a|\Psi\rangle\langle\Psi|A_a^\dag|\bm{x}\rangle\\
&=\sum_a\langle\Psi|A_a^\dag\left(\sum_{\bm{x}\in\mathcal{L}^{n'}}|\bm{x}\rangle\langle \bm{x}|\right)M_k^\dag M_kA_a|\Psi\rangle\\
&=\sum_a\langle\Psi|A_a^\dag M_k^\dag M_kA_a|\Psi\rangle.
\end{align*}
Since $\langle u_k^i|A_a|i_L\rangle$ does not depend on $i$ from Equation (\ref{eq1}), if we set this value to $\beta_{k,a}\in\mathbb{C}$, then
\begin{align*}
\langle u_k^i|A_a|j_L\rangle=
\begin{cases}
\beta_{k,a}&i=j\\
0&i\neq j
\end{cases}
\end{align*}
holds. 
Therefore, we obtain
\begin{align}
M_kA_a|\Psi\rangle&=\sum_{i\in\mathcal{L}}|u_k^i\rangle\langle u_k^i|A_a\sum_{j\in\mathcal{L}}\alpha_j|j_L\rangle\nonumber\\
&=\sum_{i\in\mathcal{L}}\sum_{j\in\mathcal{L}}\alpha_j|u_k^i\rangle\langle u_k^i|A_a|j_L\rangle\nonumber\\
&=\sum_{i\in\mathcal{L}}\alpha_i\beta_{k,a}|u_k^i\rangle\nonumber\\
&=\beta_{k,a}\sum_{i\in\mathcal{L}}\alpha_i|u_k^i\rangle.\label{eq2}
\end{align}
Hence, we have
\begin{align}
p(k)&=\sum_{a}\overline{\beta_{k,a}}\beta_{k,a}\left(\sum_{i\in\mathcal{L}}\overline{\alpha_i}\langle u_k^i|\right)\left(\sum_{i\in\mathcal{L}}\alpha_i|u_k^i\rangle\right)\nonumber\\
&=\sum_a|\beta_{k,a}|^2.\label{eq3}
\end{align}
The state after the measurement $\mathcal{M}$ when the outcome $k$ is obtained is $\sum_{i\in\mathcal{L}}\alpha_i|u_k^i\rangle\in S(\mathbb{C}^{l\otimes n'})$, since
\begin{align*}
M_k\left(\sum_aA_a|\Psi\rangle\langle\Psi|A_a^\dag\right)M_k^\dag
&=\sum_aM_kA_a|\Psi\rangle\langle\Psi|A_a^\dag M_k^\dag\\
&=\sum_a\beta_{k,a}\overline{\beta_{k,a}}\left(\sum_{i\in\mathcal{L}}\alpha_i|u_k^i\rangle\right)\left(\sum_{i\in\mathcal{L}}\overline{\alpha_i}\langle u_k^i|\right)\\
&=p(k)\left(\sum_{i\in\mathcal{L}}\alpha_i|u_k^i\rangle\right)\left(\sum_{i\in\mathcal{L}}\overline{\alpha_i}\langle u_k^i|\right)
\end{align*}
from Equation (\ref{eq2}).
On the other hand, from the definition of $M_\emptyset$, the probability of obtaining the outcome $\emptyset$ is $0$.

For each $k\in [d]$, take one unitary matrix $U_k$ such that $U_k|u_k^i\rangle=|0\dots0i\rangle$ for any $i\in\mathcal{L}$.
When the outcome $k$ is obtained, applying the unitary operator $U_k$ to the state after the measurement $\mathcal{M}$, we obtain
\begin{align*}
U_k\left(\sum_{i\in\mathcal{L}}\alpha_i|u_k^i\rangle\right)=\sum_{i\in\mathcal{L}}\alpha_i|0\dots0i\rangle=|0\dots0\rangle\otimes|\psi\rangle.
\end{align*}
Therefore, the code $C$ is $\mathcal{A}$-correcting.
The original state $|\psi\rangle=\sum_{i\in\mathcal{L}}\alpha_i|i\rangle\in S(\mathbb{C}^l)$ can be obtained by deleting the $1$st through the $(n'-1)$th particles and error-correction is completed.
\end{proof}

\section{Equivalence of quantum single deletion and single insertion error-correctabilities}\label{sec4}

As typical errors that change the number of particles, this section defines single deletion errors and single insertion errors using Kraus operators.

Let $m\geq0$ be an integer and let $|\phi\rangle\in\mathbb{C}^l$ with $\langle\phi|\phi\rangle=1$.
For an integer $p\in [m+1]$, we define a $l^m$-by-$l^{m+1}$ matrix $D_{p,|\phi\rangle}^m$ and a $l^{m+1}$-by-$l^m$ matrix $I_{p,|\phi\rangle}^m$ as 
\begin{align*}
D_{p,|\phi\rangle}^m&:=\underbrace{\mathbb{I}_l\otimes\dots\otimes\mathbb{I}_l}_{(p-1)\,{\rm times}}\otimes\langle\phi|\otimes \underbrace{\mathbb{I}_l\otimes\dots\otimes\mathbb{I}_l}_{(m-p+1)\,{\rm times}},\\
I_{p,|\phi\rangle}^m&:=\underbrace{\mathbb{I}_l\otimes\dots\otimes\mathbb{I}_l}_{(p-1)\,{\rm times}}\otimes|\phi\rangle\otimes \underbrace{\mathbb{I}_l\otimes\dots\otimes\mathbb{I}_l}_{(m-p+1)\,{\rm times}}.
\end{align*}
Here, $\mathbb{I}_l$ denotes the identity matrix of order $l$.

\begin{dfn}[Single deletion error]\label{single_del}
For a quantum state $\rho\in S(\mathbb{C}^{l\otimes n})$, we define a quantum single deletion error as a map $E^{del}:S(\mathbb{C}^{l\otimes n})\rightarrow S(\mathbb{C}^{l\otimes (n-1)})$ expressed as
\begin{align*}
E^{del}(\rho):=\sum_{p\in [n]}\left(p^-(p)\sum_{b\in\mathcal{L}}D_{p,|b\rangle}^{n-1}\rho {D_{p,|b\rangle}^{n-1}}^\dag\right),
\end{align*}
with $\sum_{p\in[n]}p^-(p)=1$ for a non-negative-valued function $p^-$.
I.e., the Kraus set for the single deletion error $E^{del}$ is
\begin{align}
\mathcal{D}_1:=\left\{\sqrt{p^-(p)}D_{p,|b\rangle}^{n-1}\,\middle|\, p\in [n],b\in\mathcal{L}\right\}.\label{eqdel}
\end{align}
\end{dfn}
Although quantum single deletion error is often defined as a partial trace at an unknown position, it can also be defined as in Definition \ref{single_del} from the discussion in Ref. \cite{9611450}.

\begin{dfn}[Single insertion error]\label{single_ins}
Suppose that a single qudit $\sigma\in S(\mathbb{C}^{l})$ is represented as $\sigma=\sum_{b\in\mathcal{L}}p_b|\phi_b\rangle\langle\phi_b|$ in the form of a spectral decomposition.
For a quantum state $\rho\in S(\mathbb{C}^{l\otimes n})$, we define a quantum single insertion error of $\sigma$ as a map $E^{ins}_\sigma:S(\mathbb{C}^{l\otimes n})\rightarrow S(\mathbb{C}^{l\otimes (n+1)})$ expressed as
\begin{align*}
E^{ins}_{\sigma}(\rho):=\sum_{p\in [n+1]}\left(p^+(p)\sum_{b\in\mathcal{L}}p_bI_{p,U|b\rangle}^{n}\rho {I_{p,U|b\rangle}^{n}}^\dag\right),
\end{align*}
with $\sum_{p\in[n+1]}p^+(p)=1$ for a non-negative-valued function $p^+$.
Here, $U$ is a unitary matrix such that $U|b\rangle=|\phi_b\rangle$ for every $b\in\mathcal{L}$.
I.e., the Kraus set for the single insertion error $E^{ins}_{\sigma}$ is
\begin{align}
\mathcal{I}_1:=\left\{\sqrt{p^+(p)p_b}I_{p,U|b\rangle}^{n}\,\middle|\, p\in [n+1],b\in\mathcal{L}\right\}.\label{eqins}
\end{align}
\end{dfn}
Quantum single insertion error is often defined as the insertion of a quantum state into an unknown position, but it is known that if the state before the error is pure, it can be expressed in a Kraus representation as in Definition \ref{single_ins}\cite{9834635}.
Throughout this paper, the discussion is based on the assumption that the quantum state before the error is pure.
This means that the codeword is pure, which is a weak assumption.

It is pointed out that the following theorem presented by the author in 2022 does not show the equivalence of error-correctability\cite{9834635}.
\begin{thm}\label{thm:equiv_ins_del}[Theorem III.4,\cite{9834635}]
KL condition for single deletion error is equivalent to that for single insertion error.
\end{thm}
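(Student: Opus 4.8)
The plan is to compare, for a fixed code $C$ with logical codewords $|i_L\rangle\in\mathbb{C}^{l\otimes n}$, the Knill--Laflamme (KL) products for the Kraus sets $\mathcal{D}_1$ and $\mathcal{I}_1$ of Equations~(\ref{eqdel}) and~(\ref{eqins}), term by term. I take the two errors to be non-degenerate, i.e. $p^-(p)>0$ for all $p\in[n]$, $p^+(p)>0$ for all $p\in[n+1]$, and $\sigma$ of full rank; this is the reading under which the equivalence holds and the one used later. First I would clear the positive scalars $\sqrt{p^-(p)}$ and $\sqrt{p^+(p)p_b}$, which do not affect whether a KL product is a multiple of $\delta_{i,j}$: then KL for $\mathcal{D}_1$ becomes ``$\langle i_L|(D^{n-1}_{p,|b\rangle})^\dag D^{n-1}_{q,|c\rangle}|j_L\rangle$ is a multiple of $\delta_{i,j}$ for all $p,q\in[n]$, $b,c\in\mathcal{L}$'', and KL for $\mathcal{I}_1$ becomes ``$\langle i_L|(I^{n}_{p,|\phi_b\rangle})^\dag I^{n}_{q,|\phi_c\rangle}|j_L\rangle$ is a multiple of $\delta_{i,j}$ for all $p,q\in[n+1]$, $b,c\in\mathcal{L}$'', with $\{|\phi_b\rangle\}_{b\in\mathcal{L}}$ the eigenbasis of $\sigma$. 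Since for each fixed $p$ the families $\{D^{n-1}_{p,|b\rangle}\}_{b\in\mathcal{L}}$ and $\{D^{n-1}_{p,|\phi_b\rangle}\}_{b\in\mathcal{L}}$ are related by the unitary matrix $[\,\overline{\langle c|\phi_b\rangle}\,]_{b,c}$, the Kraus sets $\{D^{n-1}_{p,|b\rangle}\}$ and $\{D^{n-1}_{p,|\phi_b\rangle}\}$ describe the same error $E^{del}$, so by Theorem~\ref{thm3_1} (the KL condition depends only on the error, not on the chosen Kraus set) — equivalently, by a one-line computation expanding one basis in the other — the deletion condition above is equivalent to its $\phi$-basis version. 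It therefore suffices to match the $\phi$-basis forms of the two conditions.

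The heart of the argument is a composition identity for these non-square operators. Using $(I^{m}_{p,|\phi\rangle})^\dag=D^{m}_{p,|\phi\rangle}$ we have $(I^{n}_{p,|\phi_b\rangle})^\dag I^{n}_{q,|\phi_c\rangle}=D^{n}_{p,|\phi_b\rangle}I^{n}_{q,|\phi_c\rangle}$, and a direct check on standard basis vectors — insert $|\phi_c\rangle$ at position $q$, then contract position $p$ against $\langle\phi_b|$ — gives $D^{n}_{p,|\phi_b\rangle}I^{n}_{q,|\phi_c\rangle}=I^{n-1}_{q-1,|\phi_c\rangle}D^{n-1}_{p,|\phi_b\rangle}$ when $p<q$, $=I^{n-1}_{q,|\phi_c\rangle}D^{n-1}_{p-1,|\phi_b\rangle}$ when $p>q$, and $=\langle\phi_b|\phi_c\rangle\,\mathbb{I}_{l^n}$ when $p=q$ (an insertion followed by a deletion at a different position equals a deletion followed by an insertion, with the insertion index shifted by one exactly where the two positions cross; at the same position the two operations cancel up to an overlap scalar). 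Sandwiching between $\langle i_L|$ and $|j_L\rangle$, using $(I^{n-1}_{r,|\phi\rangle})^\dag=D^{n-1}_{r,|\phi\rangle}$ and $\langle i_L|j_L\rangle=\delta_{i,j}$, the insertion KL product $\langle i_L|(I^{n}_{p,|\phi_b\rangle})^\dag I^{n}_{q,|\phi_c\rangle}|j_L\rangle$ equals $\langle i_L|(D^{n-1}_{q-1,|\phi_c\rangle})^\dag D^{n-1}_{p,|\phi_b\rangle}|j_L\rangle$ if $p<q$, equals $\langle i_L|(D^{n-1}_{q,|\phi_c\rangle})^\dag D^{n-1}_{p-1,|\phi_b\rangle}|j_L\rangle$ if $p>q$, and equals $\langle\phi_b|\phi_c\rangle\,\delta_{i,j}$ if $p=q$.

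To conclude, the $p=q$ terms are already multiples of $\delta_{i,j}$, so they impose nothing. For the rest, as $(p,q)$ ranges over the pairs of $[n+1]^2$ with $p<q$, the pair $(p',q'):=(q-1,p)$ ranges over exactly the pairs of $[n]^2$ with $q'\le p'$, and as $(p,q)$ ranges over the pairs with $p>q$, the pair $(p',q'):=(q,p-1)$ ranges over exactly the pairs of $[n]^2$ with $q'\ge p'$; together these exhaust $[n]^2$, while $(b',c')=(c,b)$ runs over all of $\mathcal{L}^2$. Hence the insertion family ``$\langle i_L|(I^{n}_{p,|\phi_b\rangle})^\dag I^{n}_{q,|\phi_c\rangle}|j_L\rangle$ is a multiple of $\delta_{i,j}$'' over $p,q\in[n+1]$, $b,c\in\mathcal{L}$ is equivalent to ``$\langle i_L|(D^{n-1}_{p',|\phi_{b'}\rangle})^\dag D^{n-1}_{q',|\phi_{c'}\rangle}|j_L\rangle$ is a multiple of $\delta_{i,j}$'' over $p',q'\in[n]$, $b',c'\in\mathcal{L}$, which by the first paragraph is precisely KL for $E^{del}$. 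This settles both implications at once; the $\Rightarrow$ direction in fact goes through for every $\sigma$, and full rank of $\sigma$ is used in $\Leftarrow$ only to ensure that $\{|\phi_b\rangle\}_{b\in\mathcal{L}}$ is a full orthonormal basis of $\mathbb{C}^l$.

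The step I expect to be the main obstacle is pinning down the composition identities — in particular the one-step index shifts and the verification that the shifted index pairs cover all of $[n]^2$; a single off-by-one there would silently break the equivalence (one also must keep the operators' index ranges valid, e.g.\ $q-1\in[n]$ and $p\in[n]$ in the case $p<q$). A secondary point is the dependence on the spectrum of $\sigma$ and on the supports of $p^\pm$: these must be handled as above (or the statement qualified to the non-degenerate case) for the ``single deletion'' and ``single insertion'' KL conditions to line up exactly.
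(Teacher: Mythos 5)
Your argument is correct, but be aware that the paper you are being compared against does not actually prove this statement: Theorem~\ref{thm:equiv_ins_del} is imported verbatim from \cite{9834635}, and the present paper only uses it (together with Theorem~\ref{thm3_1}) to deduce Theorem~\ref{thm4_4}. So what you have written is a self-contained proof of a result the paper treats as known. I checked the three ingredients you flag as delicate and they hold: $(I^{m}_{p,|\phi\rangle})^\dag=D^{m}_{p,|\phi\rangle}$; the composition identities $D^{n}_{p,|\phi_b\rangle}I^{n}_{q,|\phi_c\rangle}=I^{n-1}_{q-1,|\phi_c\rangle}D^{n-1}_{p,|\phi_b\rangle}$ for $p<q$, $=I^{n-1}_{q,|\phi_c\rangle}D^{n-1}_{p-1,|\phi_b\rangle}$ for $p>q$, $=\langle\phi_b|\phi_c\rangle\,\mathbb{I}_{l^n}$ for $p=q$, with all shifted indices staying in the legal ranges ($p\in[n]$, $q-1\in[n]$, resp.\ $p-1\in[n]$, $q\in[n]$); and the fact that the images of the two off-diagonal cases cover all of $[n]^2$, with every deletion pair $(p',q')$ conversely realized by some insertion pair ($(p,q)=(q',p'+1)$ or $(q'+1,p')$), so the two families of KL constraints coincide. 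The closest material inside the paper is the classical proof of Theorem~\ref{insdel}, whose bijection between $\Delta^-_{p_1,b_1}(A)\cap\Delta^-_{p_2,b_2}(A)$ and $\Delta^+_{p_1,b_1}(A)\cap\Delta^+_{p_2+1,b_2}(A)$, plus the observation that $\Delta^+_{p,b_1}(A)\cap\Delta^+_{p,b_2}(A)$ is trivial, is exactly the combinatorial shadow of your composition identities, but restricted to the superposition-of-basis-states codes of Section~\ref{sec5}; your operator-level version applies to arbitrary orthonormal logical codewords and so actually proves the general statement rather than its specialization. Your handling of degeneracy is also the right qualification rather than a defect: with Definitions~\ref{single_del} and~\ref{single_ins} permitting $p^{\pm}$ to vanish at some positions and $\sigma$ to be rank-deficient, the literal equivalence can fail (a rank-one $\sigma$ only produces deletion-type products with the same inserted state in both slots), so assuming full support and full rank --- while noting that the deletion-to-insertion direction holds for every $\sigma$ --- is consistent with the separable-state caveat of \cite{9611450} and with how the theorem is used later in the paper.
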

 
From Theorem \ref{thm3_1}, reported for the first time in this paper, the equivalence of error-correctability is also shown.
\begin{thm}\label{thm4_4}
It is equivalent for quantum code to be $\mathcal{D}_1$-correcting and $\mathcal{I}_1$-correcting.
\end{thm}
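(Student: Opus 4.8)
The plan is to obtain the equivalence by composing two results that are already available: Theorem~\ref{thm3_1}, which --- and this is the point that is new here, valid now also for non-square Kraus operators --- identifies ``$C$ is $\mathcal{A}$-correcting'' with ``the KL condition holds for $\mathcal{A}$'', together with Theorem~\ref{thm:equiv_ins_del}, which identifies the KL condition for a single deletion error with the KL condition for a single insertion error. Concretely, I would run the chain
\[
C\text{ is }\mathcal{D}_1\text{-correcting}
\iff \text{KL holds for }\mathcal{D}_1
\iff \text{KL holds for }\mathcal{I}_1
\iff C\text{ is }\mathcal{I}_1\text{-correcting},
\]
in which the two outer biconditionals are Theorem~\ref{thm3_1} (equivalence of its conditions~1 and~3) applied to $\mathcal{A}=\mathcal{D}_1$ and to $\mathcal{A}=\mathcal{I}_1$, while the middle one is Theorem~\ref{thm:equiv_ins_del}. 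The substantive content has thus been front-loaded into those two theorems, and what is left is to check that they apply and that their middle and endpoint statements match up.

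First I would note that $\mathcal{D}_1$ of \eqref{eqdel} and $\mathcal{I}_1$ of \eqref{eqins} are, by Definitions~\ref{single_del} and~\ref{single_ins}, the Kraus sets of bona fide quantum errors $S(\mathbb{C}^{l\otimes n})\to S(\mathbb{C}^{l\otimes(n-1)})$ and $S(\mathbb{C}^{l\otimes n})\to S(\mathbb{C}^{l\otimes(n+1)})$ in the sense of Definition~\ref{dfn2_1}, so Theorem~\ref{thm3_1} applies to each and supplies the outer equivalences. Then I would unpack condition~3 for the two Kraus sets and reconcile it with Theorem~\ref{thm:equiv_ins_del}: in each inner product $\langle i_L|A_a^\dag A_b|j_L\rangle$ the scalar weights --- $\sqrt{p^-(p)p^-(q)}$ for deletion, $\sqrt{p^+(p)p^+(q)p_bp_c}$ for insertion --- factor out, and since a nonzero rescaling of a Kraus operator only rescales the corresponding constant $\mu_{a,b}$, these weights do not affect whether the KL condition holds, provided $p^-$, $p^+$ and the eigenvalues $p_b$ are strictly positive. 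Using ${D^{m}_{p,|\phi\rangle}}^{\dag}=I^{m}_{p,|\phi\rangle}$, the remaining operator families $\{{D^{n-1}_{p,|b\rangle}}^{\dag}D^{n-1}_{q,|c\rangle}\}$ and $\{{I^{n}_{p,U|b\rangle}}^{\dag}I^{n}_{q,U|c\rangle}\}$ are exactly the ones whose KL conditions are equated by Theorem~\ref{thm:equiv_ins_del}; here one also uses that $\{U|b\rangle\}_{b\in\mathcal{L}}$ is again an orthonormal basis of $\mathbb{C}^{l}$, so the unitary carried by the insertion error does not disturb the quantification over $b,c\in\mathcal{L}$.

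The part that I expect to require the most care in the write-up --- and the only real friction --- is the parameter bookkeeping rather than any analysis: the statement should be pinned to single deletion/insertion errors whose weight functions $p^-,p^+$ have full support and whose inserted state $\sigma$ has a full-rank spectral decomposition (or else to canonical choices of these), so that the rescaling remark applies and no Kraus operator is silently dropped from the set, and one must ensure Theorem~\ref{thm:equiv_ins_del} is invoked for the insertion error in the form of Definition~\ref{single_ins}. A self-contained alternative would be to bypass the KL condition entirely and transport a recovery superoperator of $(C,\mathcal{D}_1)$ to one of $(C,\mathcal{I}_1)$ by hand, again exploiting ${D^{m}_{p,|\phi\rangle}}^{\dag}=I^{m}_{p,|\phi\rangle}$; but this only re-derives a case of Theorem~\ref{thm:equiv_ins_del} and is strictly more cumbersome, so I would keep the two-step route above.
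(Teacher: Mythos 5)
Your proposal is correct and follows essentially the same route as the paper: the paper obtains Theorem~\ref{thm4_4} precisely by combining Theorem~\ref{thm3_1} (KL condition $\Leftrightarrow$ correctability, now valid for non-square Kraus operators) applied to $\mathcal{D}_1$ and $\mathcal{I}_1$ with Theorem~\ref{thm:equiv_ins_del} (equivalence of the KL conditions for single deletion and single insertion). Your additional bookkeeping about the scalar weights and the unitary $U$ is a reasonable elaboration of what the paper leaves implicit, but it does not change the argument.
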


\section{A necessary and sufficient condition for single qudit insertion/deletion error-correction}\label{sec5}

This section discusses the code $C$ defined by the following logical codewords.
Take $A_i\subset\mathcal{L}^n$ for every $i\in\mathcal{L}$ and define
\begin{align*}
|i_L\rangle:=\frac{1}{\sqrt{|A_i|}}\sum_{{\bm a}\in A_i}|\bm{a}\rangle.
\end{align*}

A necessary and sufficient condition for the correction of single deletion errors in the code $C$ is as follows:
\begin{thm}\label{GNH}
It is a necessary and sufficient condition for the code $C$ to be $\mathcal{D}_1$-correcting that the following two conditions are satisfied simultaneously.
\begin{itemize}
\setlength{\leftskip}{-0.2cm}
\item (C1${}^{del}$: ratio condition): For each $p_1,p_2\in [n]$ and $b_1,b_2\in\mathcal{L}$, the following values are constant regardless of $i\in\mathcal{L}$:
\begin{align*}
{|\Delta^-_{p_1,b_1}(A_i)\cap\Delta^-_{p_2,b_2}(A_i)|}\,/\,{|A_i|}.
\end{align*}

\item (C2${}^{del}$: distance condition): For any $p_1,p_2\in [n]$ and any $b_1,b_2,i,j\in\mathcal{L}$, if $i\neq j$, then
\begin{align*}
|\Delta^-_{p_1,b_1}(A_i)\cap\Delta^-_{p_2,b_2}(A_j)|=0.
\end{align*}
\end{itemize}
We refer to (C1${}^{del}$) and (C2${}^{del}$) collectively as (C${}^{del}$).
Here,
\begin{align*}
\Delta_{p,b}^-(A):=\{a_1 \dots a_{p-1} a_{p+1} 
 \dots a_n \mid a_1 \dots a_{p-1} b a_{p+1} \dots a_n \in A \}
\end{align*}
for a non-empty set $A \subset\mathcal{L}^n$ and $p \in [n]$, $b\in\mathcal{L}$.
\end{thm}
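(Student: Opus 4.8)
The plan is to reduce the statement to the Knill--Laflamme condition (condition \textit{3} of Theorem \ref{thm3_1}) for the Kraus set $\mathcal{D}_1$ of (\ref{eqdel}), and then to rewrite that algebraic condition purely combinatorially in terms of the sets $A_i$. Throughout I will work with a probability function $p^-$ having full support, i.e.\ $p^-(p)>0$ for every $p\in[n]$ (the worst-case single deletion error); since each scalar $\sqrt{p^-(p)}$ merely rescales a Kraus operator, it only rescales the numbers $\mu_{a,b}$ appearing in the KL condition and hence is irrelevant to whether the quantities below vanish or are independent of $i$.

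First I would compute the action of a single deletion Kraus operator on a logical codeword. From $D^{n-1}_{p,|b\rangle}|\bm a\rangle=\langle b|a_p\rangle\,|a_1\cdots a_{p-1}a_{p+1}\cdots a_n\rangle$, and the observation that for fixed $p$ and $b$ the map $\bm a\mapsto a_1\cdots a_{p-1}a_{p+1}\cdots a_n$ is injective on $\{\bm a\in A_i:a_p=b\}$ with image exactly $\Delta^-_{p,b}(A_i)$, one obtains
\[
D^{n-1}_{p,|b\rangle}|i_L\rangle=\frac{1}{\sqrt{|A_i|}}\sum_{\bm c\in\Delta^-_{p,b}(A_i)}|\bm c\rangle .
\]
Pairing two such vectors and using orthonormality of $\{|\bm c\rangle:\bm c\in\mathcal{L}^{n-1}\}$ then gives the key identity
\[
\langle i_L|\bigl(D^{n-1}_{p_1,|b_1\rangle}\bigr)^\dag D^{n-1}_{p_2,|b_2\rangle}|j_L\rangle
=\frac{|\Delta^-_{p_1,b_1}(A_i)\cap\Delta^-_{p_2,b_2}(A_j)|}{\sqrt{|A_i|\,|A_j|}} .
\]

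With this identity the KL condition $\langle i_L|A_a^\dag A_b|j_L\rangle=\mu_{a,b}\delta_{i,j}$ for $\mathcal{D}_1$ splits into two families. For $i\neq j$ it asserts that the right-hand side above vanishes for all $p_1,p_2,b_1,b_2$; since the denominator is nonzero this is precisely (C2${}^{del}$). For $i=j$ it asserts that $\sqrt{p^-(p_1)p^-(p_2)}\,|\Delta^-_{p_1,b_1}(A_i)\cap\Delta^-_{p_2,b_2}(A_i)|/|A_i|$ equals a number $\mu_{a,b}$ not depending on $i$; dividing by the fixed positive constant $\sqrt{p^-(p_1)p^-(p_2)}$, this is precisely (C1${}^{del}$). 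Conversely, if (C${}^{del}$) holds one defines $\mu_{a,b}$ to be that common constant (and $0$ in the degenerate cases), and the KL condition is verified. Invoking the equivalence \textit{1}$\Leftrightarrow$\textit{3} of Theorem \ref{thm3_1} then yields the claim that $C$ is $\mathcal{D}_1$-correcting iff (C${}^{del}$) holds.

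The step I expect to demand the most care is the bookkeeping in the first computation: one must check that with the deleted position and symbol fixed no two codewords of $A_i$ collapse to the same shortened string, so that no multiplicities appear and the image is exactly $\Delta^-_{p,b}(A_i)$, and one must allow $\Delta^-_{p,b}(A_i)$ to be empty, in which case the corresponding vector is $0$ and the ratio in (C1${}^{del}$) is the constant $0$. A secondary point worth spelling out is that $\mathcal{D}_1$-correctability does not depend on the particular choice of $p^-$ among functions of full support, so that (C${}^{del}$) is genuinely a condition on the combinatorial data $\{A_i\}$ alone.
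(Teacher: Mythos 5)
Your proposal is correct and follows essentially the same route as the paper: compute $D^{n-1}_{p,|b\rangle}|i_L\rangle$, obtain the identity $\langle i_L|{D^{n-1}_{p_1,|b_1\rangle}}^\dag D^{n-1}_{p_2,|b_2\rangle}|j_L\rangle=|\Delta^-_{p_1,b_1}(A_i)\cap\Delta^-_{p_2,b_2}(A_j)|/\sqrt{|A_i||A_j|}$, and invoke Theorem \ref{thm3_1} to identify (C1${}^{del}$) with the $i=j$ case and (C2${}^{del}$) with the $i\neq j$ case of the KL condition. Your extra care about the $\sqrt{p^-(p)}$ scalars (full support) and the injectivity of the shortening map are details the paper leaves implicit, but they do not change the argument.
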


\begin{proof}
For $p\in [n]$ and $b,i\in\mathcal{L}$, we have 
\begin{align*}
D_{p,|b\rangle}^{n-1}|i_L\rangle&=\frac{1}{\sqrt{|A_i|}}\sum_{\bm{a}\in A_i}D_{p,|b\rangle}^{n-1}|\bm{a}\rangle=\frac{1}{\sqrt{|A_i|}}\sum_{\tilde{\bm{a}}\in\Delta^-_{p,b}(A_i)}|\tilde{\bm{a}}\rangle.
\end{align*}
Hence, for $p_1,p_2\in[n]$ and $b_1,b_2,i,j\in\mathcal{L}$, we obtain
\begin{align*}
\langle i_L|{D_{p_1,|b_1\rangle}^{n-1}}^\dag D_{p_2,|b_2\rangle}^{n-1}|j_L\rangle&=\frac{|\Delta^-_{p_1,b_1}(A_i)\cap\Delta^-_{p_2,b_2}(A_j)|}{\sqrt{|A_i||A_j|}}.
\end{align*}
Therefore, from Theorem \ref{thm3_1}, the condition (C${}^{del}$) is a necessary and sufficient condition for $\mathcal{D}_1$-correcting.
\end{proof}

Theorem \ref{GNH} extends the code by Nakayama-Hagiwara\cite{Nakayama20202} with $l=2$ to the case where $l$ is any positive integer, and further generalizes it to be necessary and sufficient level of error-correction.
Note that the decoder shown in Ref.\cite{Nakayama20202} is consequently the same as the decoder shown in the proof of Theorem \ref{thm3_1} in this paper.

A necessary and sufficient condition for the correction of single insertion errors in the code $C$ is as follows:
\begin{thm}\label{GSH}
It is a necessary and sufficient condition for the code $C$ to be $\mathcal{I}_1$-correcting that the following two conditions are satisfied simultaneously.
\begin{itemize}
\setlength{\leftskip}{-0.2cm}
\item (C1${}^{ins}$: ratio condition): For each $p_1,p_2\in [n+1]$ and $b_1,b_2\in\mathcal{L}$, the following values are constant regardless of $i\in\mathcal{L}$:
\begin{align*}
|\Delta^+_{p_1,b_1}(A_i)\cap\Delta^+_{p_2,b_2}(A_i)|\,/\,|A_i|.
\end{align*}
\item (C2${}^{ins}$: distance condition): For any $p_1,p_2\in [n+1]$ and any $b_1,b_2,i,j\in\mathcal{L}$, if $i\neq j$, then
\begin{align*}
|\Delta^+_{p_1,b_1}(A_i)\cap\Delta^+_{p_2,b_2}(A_j)|=0.
\end{align*}
\end{itemize}
We refer to (C1${}^{ins}$) and (C2${}^{ins}$) collectively as (C${}^{ins}$).
Here,
\begin{align*}
\Delta_{p,b}^+(A)
 \coloneqq & \{a_1 \dots a_{p-1} b a_{p} \dots a_n 
 \mid a_1 \dots a_{p-1} a_p \dots a_n \in A \}
\end{align*}
for a non-empty set $A \subset\mathcal{L}^n$ and $p \in [n+1]$, $b\in\mathcal{L}$.
\end{thm}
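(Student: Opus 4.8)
The plan is to follow the two-step template of the proof of Theorem \ref{GNH} — first translate the Knill--Laflamme inner products of Theorem \ref{thm3_1} into statements counting elements of the sets $A_i$, then read off (C1${}^{ins}$) and (C2${}^{ins}$) — but with one extra preliminary step, because the Kraus operators of $\mathcal{I}_1$ insert the vectors $|\phi_b\rangle=U|b\rangle$, which are not computational basis states in general. Indeed, for $p_1\neq p_2$ the quantity $\langle i_L|{I_{p_1,U|b_1\rangle}^{n}}^{\dag}I_{p_2,U|b_2\rangle}^{n}|j_L\rangle$ retains a dependence on the entries of $U$ and does not by itself collapse to something of the form $|\Delta^+_{\cdot}\cap\Delta^+_{\cdot}|/\sqrt{|A_i||A_j|}$.

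First I would eliminate $U$. For each fixed $p$, since $\{|\phi_b\rangle\}_{b\in\mathcal{L}}$ and $\{|c\rangle\}_{c\in\mathcal{L}}$ are both orthonormal bases of $\mathbb{C}^l$, one has $I_{p,U|b\rangle}^{n}=\sum_{c\in\mathcal{L}}\langle c|U|b\rangle\,I_{p,|c\rangle}^{n}$ and conversely $I_{p,|c\rangle}^{n}=\sum_{b\in\mathcal{L}}\langle\phi_b|c\rangle\,I_{p,|\phi_b\rangle}^{n}$; hence the operators occurring in $\mathcal{I}_1$ and the operators in $\mathcal{J}:=\{I_{p,|c\rangle}^{n}\mid p\in[n+1],\,c\in\mathcal{L}\}$ span the same space of matrices (reading $\mathcal{I}_1$, as in the proof of Theorem \ref{GNH}, with $p^+$ of full support and $\sigma$ of full rank, so that no operator is dropped). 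Because $\langle i_L|X^{\dag}Y|j_L\rangle$ is conjugate-linear in $X$ and linear in $Y$ — if $\langle i_L|G_c^{\dag}G_d|j_L\rangle=\nu_{c,d}\delta_{i,j}$ for all $c,d$ and $X=\sum_c\gamma_cG_c$, $Y=\sum_d\gamma'_dG_d$, then $\langle i_L|X^{\dag}Y|j_L\rangle=\bigl(\sum_{c,d}\overline{\gamma_c}\gamma'_d\nu_{c,d}\bigr)\delta_{i,j}$ — all Knill--Laflamme inner products of $\mathcal{I}_1$ have the form $\mu\,\delta_{i,j}$ if and only if all those of $\mathcal{J}$ do. Combined with Theorem \ref{thm3_1}, this shows that $C$ is $\mathcal{I}_1$-correcting if and only if the Knill--Laflamme inner products of $\mathcal{J}$ all have that form.

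Next I would run the deletion-style computation on $\mathcal{J}$, which is now completely parallel to the proof of Theorem \ref{GNH}. For $p\in[n+1]$ and $c,i\in\mathcal{L}$, inserting $c$ at position $p$ is a bijection $A_i\to\Delta^+_{p,c}(A_i)$, so
\[
I_{p,|c\rangle}^{n}|i_L\rangle=\frac{1}{\sqrt{|A_i|}}\sum_{\bm{a}\in A_i}|a_1 \dots a_{p-1}\, c\, a_p \dots a_n\rangle=\frac{1}{\sqrt{|A_i|}}\sum_{\tilde{\bm{a}}\in\Delta^+_{p,c}(A_i)}|\tilde{\bm{a}}\rangle,
\]
and, since ${I_{p,|c\rangle}^{n}}^{\dag}=D_{p,|c\rangle}^{n}$ by the definitions, for all $p_1,p_2\in[n+1]$ and $b_1,b_2,i,j\in\mathcal{L}$,
\[
\langle i_L|{I_{p_1,|b_1\rangle}^{n}}^{\dag}I_{p_2,|b_2\rangle}^{n}|j_L\rangle=\frac{|\Delta^+_{p_1,b_1}(A_i)\cap\Delta^+_{p_2,b_2}(A_j)|}{\sqrt{|A_i||A_j|}}.
\]
Then $C$ is $\mathcal{I}_1$-correcting iff this expression equals $\mu_{(p_1,b_1),(p_2,b_2)}\delta_{i,j}$ for all indices; separating the case $i=j$ (which forces $|\Delta^+_{p_1,b_1}(A_i)\cap\Delta^+_{p_2,b_2}(A_i)|/|A_i|$ to be independent of $i$) from the case $i\neq j$ (which forces the intersection to be empty) gives exactly (C1${}^{ins}$) and (C2${}^{ins}$).

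The step I expect to be the real obstacle is the $U$-elimination: justifying cleanly that replacing $\mathcal{I}_1$ by $\mathcal{J}$ preserves the property that all Knill--Laflamme inner products have the form $\mu\delta_{i,j}$, together with the support bookkeeping for $p^+$ and $\sigma$. (A different route would be to combine Theorem \ref{thm4_4} with Theorem \ref{GNH} and reduce instead to the purely combinatorial equivalence (C${}^{del}$) $\Longleftrightarrow$ (C${}^{ins}$); but unwinding that equivalence directly seems no easier than the argument above.)
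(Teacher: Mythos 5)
Your proposal is correct, and its skeleton is the same as the paper's: reduce to the KL condition via Theorem~\ref{thm3_1}, compute $\langle i_L|{I_{p_1,|b_1\rangle}^{n}}^{\dag}I_{p_2,|b_2\rangle}^{n}|j_L\rangle=|\Delta^+_{p_1,b_1}(A_i)\cap\Delta^+_{p_2,b_2}(A_j)|/\sqrt{|A_i||A_j|}$, and split the cases $i=j$ and $i\neq j$ to read off (C1${}^{ins}$) and (C2${}^{ins}$). Where you genuinely diverge is the elimination of the eigenbasis $U$ of $\sigma$, and your route is in fact the sounder one. The paper disposes of $U$ by asserting a single unitary $U_{p_1,p_2}$ with $I_{p_1,U|b_1\rangle}^n=U_{p_1,p_2}I_{p_1,|b_1\rangle}^n$ and $I_{p_2,U|b_2\rangle}^n=U_{p_1,p_2}I_{p_2,|b_2\rangle}^n$, from which it concludes the pairwise equality $\langle i_L|{I_{p_1,U|b_1\rangle}^{n}}^{\dag}I_{p_2,U|b_2\rangle}^{n}|j_L\rangle=\langle i_L|{I_{p_1,|b_1\rangle}^{n}}^{\dag}I_{p_2,|b_2\rangle}^{n}|j_L\rangle$; but for $p_1\neq p_2$ such a unitary need not exist, since it would force the operator identity ${I_{p_1,U|b_1\rangle}^{n}}^{\dag}I_{p_2,U|b_2\rangle}^{n}={I_{p_1,|b_1\rangle}^{n}}^{\dag}I_{p_2,|b_2\rangle}^{n}$, which fails for general $U$ (already for $p_1=1$, $p_2=n+1$ the left-hand side depends nontrivially on $U$: e.g.\ with $A_i=\{0\dots0\}$ and $b_1=b_2=0$ the $i=j$ matrix element is $|\langle 0|U|0\rangle|^2$ rather than $1$). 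Your argument --- expand $I^{n}_{p,U|b\rangle}=\sum_{c}\langle c|U|b\rangle\,I^{n}_{p,|c\rangle}$ and conversely $I^{n}_{p,|c\rangle}=\sum_{b}\langle\phi_b|c\rangle\,I^{n}_{p,|\phi_b\rangle}$, and use sesquilinearity of the KL form to transfer the property ``all inner products are $\mu\,\delta_{i,j}$ with $\mu$ independent of $i,j$'' between the two families --- establishes exactly the equivalence of the two KL conditions, which is all the theorem needs, without claiming equality of individual matrix elements; the price is that the explicit combinatorial value of the KL constant is obtained only for the computational-basis family, which is harmless. Your explicit bookkeeping that $p^+$ has full support and $\sigma$ full rank (so no Kraus operator degenerates to zero in the necessity direction) makes precise a reading of ``$\mathcal{I}_1$-correcting'' that the paper leaves implicit, and your fallback suggestion of deducing the result from Theorem~\ref{thm4_4}, Theorem~\ref{GNH} and the classical equivalence of (C${}^{del}$) and (C${}^{ins}$) is also viable, though no simpler.
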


\begin{proof}
For $p\in [n+1]$ and $b,i\in\mathcal{L}$, we have 
\begin{align*}
I_{p,|b\rangle}^{n}|i_L\rangle&=\frac{1}{\sqrt{|A_i|}}\sum_{\bm{a}\in A_i}I_{p,|b\rangle}^{n}|\bm{a}\rangle=\frac{1}{\sqrt{|A_i|}}\sum_{\tilde{\bm{a}}\in\Delta^+_{p,b}(A_i)}|\tilde{\bm{a}}\rangle.
\end{align*}
Hence, for $p_1,p_2\in[n+1]$ and $b_1,b_2,i,j\in\mathcal{L}$, by noting that there exists a unitary matrix $U_{p_1,p_2}$ such that $I_{p_1,U|b_1\rangle}^n=U_{p_1,p_2}I_{p_1,|b_1\rangle}^n$ and $I_{p_2,U|b_2\rangle}^n=U_{p_1,p_2}I_{p_2,|b_2\rangle}^n$, we obtain
\begin{align*}
\langle i_L|{I_{p_1,U|b_1\rangle}^{n}}^\dag I_{p_2,U|b_2\rangle}^{n}|j_L\rangle&=\langle i_L|{I_{p_1,|b_1\rangle}^{n}}^\dag I_{p_2,|b_2\rangle}^{n}|j_L\rangle\\
&=\frac{|\Delta^+_{p_1,b_1}(A_i)\cap\Delta^+_{p_2,b_2}(A_j)|}{\sqrt{|A_i||A_j|}}.
\end{align*}
Therefore, from Theorem \ref{thm3_1}, the condition (C${}^{ins}$) is a necessary and sufficient condition for $\mathcal{I}_1$-correcting.
\end{proof}

Theorem \ref{GSH} extends $l$ to any positive integer for codes with $l=2$ in Ref.\cite{9834635} and greatly generalizes the error-correction condition to necessary and sufficient level.
Note that although the 4-qubit code satisfies the condition (C${}^{ins}$), the decoder given in the proof of Theorem \ref{thm3_1} in this paper is not the one by Hagiwara\cite{ManabuHagiwara20212020XBL0191}, but corresponds to the one given in Ref.\cite{10439039}.

\begin{thm}\label{insdel}
Conditions (C${}^{del}$) and (C${}^{ins}$) are equivalent.
\end{thm}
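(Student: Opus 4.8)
The plan is to deduce Theorem~\ref{insdel} immediately from the three results already proved in Sections~\ref{sec4} and~\ref{sec5} by chaining them through the notion of correctability. Theorem~\ref{GNH} gives ``$(\mathrm{C}^{del})$ holds'' $\Leftrightarrow$ ``$C$ is $\mathcal{D}_1$-correcting''; Theorem~\ref{GSH} gives ``$(\mathrm{C}^{ins})$ holds'' $\Leftrightarrow$ ``$C$ is $\mathcal{I}_1$-correcting''; and Theorem~\ref{thm4_4} gives ``$C$ is $\mathcal{D}_1$-correcting'' $\Leftrightarrow$ ``$C$ is $\mathcal{I}_1$-correcting''. Composing these three biconditionals yields $(\mathrm{C}^{del}) \Leftrightarrow (\mathrm{C}^{ins})$, which is exactly the assertion. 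So, with the earlier machinery in hand, the proof is essentially two lines.

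The only point needing a word of care is that the Kraus sets $\mathcal{D}_1$ and $\mathcal{I}_1$ carry auxiliary data — the position weights $p^-$, $p^+$ and, for insertion, the inserted state $\sigma$ and the unitary $U$ — so one should check that the three theorems are applied to genuinely matching objects. This is harmless: the proofs of Theorems~\ref{GNH} and~\ref{GSH} show that, once the normalizing scalars $\sqrt{p^-(\cdot)}$, resp.\ $\sqrt{p^+(\cdot)p_b}$, and the unitary $U$ are stripped off, the Knill--Laflamme inner products $\langle i_L|{D^{n-1}_{p_1,|b_1\rangle}}^\dag D^{n-1}_{p_2,|b_2\rangle}|j_L\rangle$ and $\langle i_L|{I^{n}_{p_1,U|b_1\rangle}}^\dag I^{n}_{p_2,U|b_2\rangle}|j_L\rangle$ depend only on the combinatorial quantities $|\Delta^{-}_{p_1,b_1}(A_i)\cap\Delta^{-}_{p_2,b_2}(A_j)|$ and $|\Delta^{+}_{p_1,b_1}(A_i)\cap\Delta^{+}_{p_2,b_2}(A_j)|$. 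Hence $(\mathrm{C}^{del})$, resp.\ $(\mathrm{C}^{ins})$, is equivalent to the KL condition for \emph{any} admissible choice of that auxiliary data with positive weights, and in particular all three correctability statements invoked above refer to the same code $C$ on the same number of qudits; there is no hidden dependence to worry about.

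An alternative, quantum-free route would be to prove the equivalence of $(\mathrm{C}^{del})$ and $(\mathrm{C}^{ins})$ directly as a statement about the set families $\{\Delta^{-}_{p,b}(A_i)\}$ and $\{\Delta^{+}_{p,b}(A_i)\}$. For the distance conditions $(\mathrm{C}2^{del})$ and $(\mathrm{C}2^{ins})$ this is the classical Levenshtein-type fact that single-deletion balls overlap precisely when the corresponding single-insertion balls overlap, handled by the usual insert-then-delete bijection. The delicate part would be the ratio conditions $(\mathrm{C}1^{del})$ and $(\mathrm{C}1^{ins})$: the intersection cardinalities $|\Delta^{-}_{p_1,b_1}(A_i)\cap\Delta^{-}_{p_2,b_2}(A_i)|$ and $|\Delta^{+}_{p_1,b_1}(A_i)\cap\Delta^{+}_{p_2,b_2}(A_i)|$ are not equal term by term, so one would need the linear-algebraic identity relating the deletion and insertion Gram matrices of the logical codewords that already underlies Theorem~\ref{thm:equiv_ins_del}. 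Since Theorem~\ref{thm4_4} packages exactly that identity for us, I would present the short chaining argument as the proof and mention the combinatorial reformulation only as a remark; the main (indeed only real) obstacle has therefore already been absorbed into Theorem~\ref{thm4_4}.
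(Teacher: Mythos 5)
Your chaining argument is correct, and in fact it is exactly the route the paper itself points out: right after the statement, the author remarks that Theorem~\ref{insdel} ``follows immediately from Theorem~\ref{GNH}, Theorem~\ref{GSH} and Theorem~\ref{thm4_4}.'' However, the proof the paper actually writes out is a different, purely classical one: it exhibits a bijection showing $|\Delta^{-}_{p_1,b_1}(A)\cap\Delta^{-}_{p_2,b_2}(A)|=|\Delta^{+}_{p_1,b_1}(A)\cap\Delta^{+}_{p_2+1,b_2}(A)|$ (for $p_1\leq p_2$, by reinserting $b_1$ at position $p_1$ and $b_2$ at position $p_2+1$), notes that $|\Delta^{+}_{p,b_1}(A)\cap\Delta^{+}_{p,b_2}(A)|$ is $|A|$ or $0$ according to whether $b_1=b_2$, and invokes the classical insertion/deletion equivalence for the distance conditions. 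Your version buys brevity but inherits the full quantum machinery: it rests on Theorem~\ref{thm4_4}, which in turn rests on Theorem~\ref{thm3_1} together with the cited KL-equivalence (Theorem~\ref{thm:equiv_ins_del}); the paper's combinatorial proof is self-contained and shows the equivalence is a statement about set families alone, independent of any quantum interpretation. One quibble with your closing remark: the ratio conditions are not as delicate as you suggest and do not require any Gram-matrix identity --- the explicit index-shift bijection above, plus the easy diagonal case $p_1=p_2$, settles (C1${}^{del}$) $\Leftrightarrow$ (C1${}^{ins}$) directly, which is precisely what the paper does.
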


Theorem \ref{insdel} follows immediately from Theorem \ref{GNH}, Theorem \ref{GSH} and Theorem \ref{thm4_4}, but can also be proved classically as follows:

\begin{proof}[Classical proof of Theorem \ref{insdel}]
For integers $p_1,p_2\in[n]$, assume that $p_1\leq p_2$ without loss of generality.
The sequence $\bm{x}\in\Delta_{p_1,b_1}^-(A)\cap\Delta_{p_2,b_2}^-(A)$ and the sequence $\tilde{\bm{x}}\in\Delta_{p_1,b_1}^+(A)\cap\Delta_{p_2+1,b_2}^+(A)$ correspond one-to-one, where 
\begin{align*}
\bm{x}&=x_1\dots x_{p_1-1}x_{p_1}\dots x_{p_2-1}x_{p_2+1}\dots x_{n-1},\\
\tilde{\bm{x}}&=x_1\dots x_{p_1-1}b_1x_{p_1}\dots x_{p_2-1}b_2x_{p_2+1}\dots x_{n-1}.
\end{align*}
Therefore, we obtain
\begin{align*}
|\Delta_{p_1,b_1}^-(A)\cap\Delta_{p_2,b_2}^-(A)|=|\Delta_{p_1,b_1}^+(A)\cap\Delta_{p_2+1,b_2}^+(A)|.
\end{align*}
On the other hand, we have
\begin{align*}
|\Delta_{p,b_1}^+(A)\cap\Delta_{p,b_2}^+(A)|=
\begin{cases}
|A|&b_1=b_2\\
0&b_1\neq b_2
\end{cases}
\end{align*}
for any $p\in[n+1]$.
From the above, it is shown that (C1${}^{del}$) and (C1${}^{ins}$) are equivalent.

From the equivalence of classical insertion codes and classical deletion codes, the equivalence of (C2${}^{del}$) and (C2${}^{ins}$) is obvious.
\end{proof}

\section{Example of insertion/deletion qudit code and its decoding algorithm}\label{sec6}

By defining the sets $A_0,A_1$, and $A_2$ as follows, we can construct an example of a $6$-qudit single insertion/deletion error-correcting code with $l=3$:
\begin{align*}
A_0&=\{001122,112200,220011\},\\
A_1&=\{002211,110022,221100\},\\
A_2&=\{001100,112211,220022\}.
\end{align*}
We can easily check that these sets satisfy the condition (C${}^{del}$) by Table \ref{tab1}.
\begin{table}[htbp]
\caption{$\Delta^-_{p,b}(A_i)$ for $p\in [6]$ and $b,i\in\{0,1,2\}$}
\begin{center}

\begin{tabular}{|c|c|c|c|}
\hline
\multirow{2}{*}{$\Delta^-_{p,b}(A_0)$}&\multicolumn{3}{c|}{$A_0=\{001122,112200,220011\}$}\\
\cline{2-4} 
&$b=0$ & $b=1$ & $b=2$\\\hline
$p=1$& $\{01122\}$ & $\{12200\}$ & $\{20011\}$\\\hline
$p=2$& $\{01122\}$ & $\{12200\}$ & $\{20011\}$\\\hline
$p=3$& $\{22011\}$ & $\{00122\}$ & $\{11200\}$\\\hline
$p=4$& $\{22011\}$ & $\{00122\}$ & $\{11200\}$\\\hline
$p=5$& $\{11220\}$ & $\{22001\}$ & $\{00112\}$\\\hline
$p=6$& $\{11220\}$ & $\{22001\}$ & $\{00112\}$\\\hline
\end{tabular}

\medskip

\begin{tabular}{|c|c|c|c|}
\hline
\multirow{2}{*}{$\Delta^-_{p,b}(A_1)$}&\multicolumn{3}{c|}{$A_1=\{002211,110022,221100\}$}\\
\cline{2-4} 
&$b=0$ & $b=1$ & $b=2$\\\hline
$p=1$& $\{02211\}$ & $\{10022\}$ & $\{21100\}$\\\hline
$p=2$& $\{02211\}$ & $\{10022\}$ & $\{21100\}$\\\hline
$p=3$& $\{11022\}$ & $\{22100\}$ & $\{00211\}$\\\hline
$p=4$& $\{11022\}$ & $\{22100\}$ & $\{00211\}$\\\hline
$p=5$& $\{22110\}$ & $\{00221\}$ & $\{11002\}$\\\hline
$p=6$& $\{22110\}$ & $\{00221\}$ & $\{11002\}$\\\hline
\end{tabular}

\medskip

\begin{tabular}{|c|c|c|c|}
\hline
\multirow{2}{*}{$\Delta^-_{p,b}(A_2)$}&\multicolumn{3}{c|}{$A_2=\{001100,112211,220022\}$}\\
\cline{2-4} 
&$b=0$ & $b=1$ & $b=2$\\\hline
$p=1$& $\{01100\}$ & $\{12211\}$ & $\{20022\}$\\\hline
$p=2$& $\{01100\}$ & $\{12211\}$ & $\{20022\}$\\\hline
$p=3$& $\{22022\}$ & $\{00100\}$ & $\{11211\}$\\\hline
$p=4$& $\{22022\}$ & $\{00100\}$ & $\{11211\}$\\\hline
$p=5$& $\{00110\}$ & $\{11221\}$ & $\{22002\}$\\\hline
$p=6$& $\{00110\}$ & $\{11221\}$ & $\{22002\}$\\\hline
\end{tabular}

\label{tab1}
\end{center}
\end{table}
By Theorem \ref{insdel}, these sets also satisfy the condition (C${}^{ins}$).
Therefore, the quantum code whose logical codewords are 
\begin{align*}
|0_L\rangle&:=\frac1{\sqrt3}(|001122\rangle+|112200\rangle+|220011\rangle),\\
|1_L\rangle&:=\frac1{\sqrt3}(|002211\rangle+|110022\rangle+|221100\rangle),\\
|2_L\rangle&:=\frac1{\sqrt3}(|001100\rangle+|112211\rangle+|220022\rangle),
\end{align*}
is $\mathcal{D}_1$-correcting and $\mathcal{I}_1$-correcting.
Using the decoding algorithm presented under Theorem \ref{thm3_1}, we can correct any single deletion error or any single insertion error as follows.

\subsection{Decoding algorithm for single deletion errors}
Since $n=6$, from Equation (\ref{eqdel}) there are 18 Kraus operators for the single deletion error, which are labeled newly as $A_a$, where $a\in\{1,2,\dots,18\}$.
The dimension of the vector space $V^i$ spanned by $A_a|i_L\rangle$ for all Kraus operators $A_a$ is $9$, and its basis $\{|u_1^i\rangle,|u_2^i\rangle,\dots,|u_9^i\rangle\}$ is defined by applying the Gram-Schmidt orthonormalization.
For example, for $i=0$, we have the following results.
Compare with Table \ref{tab1}.
\begin{align*}
|u_1^0\rangle=|01122\rangle,\\
|u_2^0\rangle=|22011\rangle,\\
|u_3^0\rangle=|11220\rangle, \\
|u_4^0\rangle=|12200\rangle,\\
|u_5^0\rangle=|00122\rangle,\\
|u_6^0\rangle=|22001\rangle,\\
|u_7^0\rangle=|20011\rangle,\\
|u_8^0\rangle=|11200\rangle,\\
|u_9^0\rangle=|00112\rangle.
\end{align*}

When the measurement $\mathcal{M}$ is performed on the quantum state $E^{del}(\alpha_0|0_L\rangle+\alpha_1|1_L\rangle+\alpha_2|2_L\rangle)\in S(\mathbb{C}^{3\otimes5})$, the probability of obtaining the outcome $k\in[9]$ is
\begin{align*}
p(k)=
\begin{cases}
\displaystyle\frac13(p^-(1)+p^-(2))&k=1,4,7\medskip\\
\displaystyle\frac13(p^-(3)+p^-(4))&k=2,5,8\medskip\\
\displaystyle\frac13(p^-(5)+p^-(6))&k=3,6,9
\end{cases}
\end{align*}
from Equation (\ref{eq3}).
Then, the unitary operator $U_k$ corresponding to each outcome $k\in[9]$ is applied and the $1$st through $4$th particles are deleted.
Thus, the original quantum state $\alpha_0|0\rangle+\alpha_1|1\rangle+\alpha_2|2\rangle)\in S(\mathbb{C}^3)$ can be obtained.

\subsection{Decoding algorithm for single insertion errors}

From Equation (\ref{eqins}) there are $21$ Kraus operators for the single insertion error, which are labeled newly as $A_a$, where $a\in\{1,2,\dots,21\}$.
The dimension of the vector space $V^i$ spanned by $A_a|i_L\rangle$ for all Kraus operators $A_a$ is $21$, and its basis $\{|u_1^i\rangle,|u_2^i\rangle,\dots,|u_{21}^i\rangle\}$ is defined by applying the Gram-Schmidt orthonormalization.
We have
\begin{align*}
|u_{7j+1}^i\rangle&=&&\!\!\!\!I_{1,|j\rangle}^6|i_L\rangle,\\
|u_{7j+2}^i\rangle&=&&\!\!\!\!\sqrt{\frac{9}{8}}\bigg(-\frac13I_{1,|j\rangle}^6|i_L\rangle+I_{2,|j\rangle}^6|i_L\rangle\bigg),\\
|u_{7j+3}^i\rangle&=&&\!\!\!\!\sqrt{\frac{6}{5}}\bigg(-\frac14I_{1,|j\rangle}^6|i_L\rangle-\frac14I_{2,|j\rangle}^6|i_L\rangle+I_{3,|j\rangle}^6|i_L\rangle\bigg),\\
|u_{7j+4}^i\rangle&=&&\!\!\!\!\sqrt{\frac{15}{13}}\bigg(\frac1{10}I_{1,|j\rangle}^6|i_L\rangle+\frac1{10}I_{2,|j\rangle}^6|i_L\rangle-\frac2{5}I_{3,|j\rangle}^6|i_L\rangle+I_{4,|j\rangle}^6|i_L\rangle\bigg),\\
|u_{7j+5}^i\rangle&=&&\!\!\!\!\sqrt{\frac{39}{32}}\bigg(\frac1{13}I_{1,|j\rangle}^6|i_L\rangle+\frac1{13}I_{2,|j\rangle}^6|i_L\rangle-\frac4{13}I_{3,|j\rangle}^6|i_L\rangle-\frac3{13}I_{4,|j\rangle}^6|i_L\rangle+I_{5,|j\rangle}^6|i_L\rangle\bigg),\\
|u_{7j+6}^i\rangle&=&&\!\!\!\!\sqrt{\frac{96}{83}}\bigg(-\frac1{32}I_{1,|j\rangle}^6|i_L\rangle-\frac1{32}I_{2,|j\rangle}^6|i_L\rangle+\frac1{8}I_{3,|j\rangle}^6|i_L\rangle+\frac3{32}I_{4,|j\rangle}^6|i_L\rangle\\
&&&\!\!\!\!-\frac{13}{32}I_{5,|j\rangle}^6|i_L\rangle+I_{6,|j\rangle}^6|i_L\rangle\bigg),\\
|u_{7j+7}^i\rangle&=&&\!\!\!\!\sqrt{\frac{83}{68}}\bigg(-\frac2{83}I_{1,|j\rangle}^6|i_L\rangle-\frac2{83}I_{2,|j\rangle}^6|i_L\rangle+\frac8{83}I_{3,|j\rangle}^6|i_L\rangle+\frac6{83}I_{4,|j\rangle}^6|i_L\rangle\\
&&&\!\!\!\!-\frac{26}{83}I_{5,|j\rangle}^6|i_L\rangle-\frac{19}{83}I_{6,|j\rangle}^6|i_L\rangle+I_{7,|j\rangle}^6|i_L\rangle\bigg),
\end{align*}
for $i,j\in\{0,1,2\}$.
Unlike the deletion case, since $A_a|i_L\rangle$ are not all orthogonal to each other, the result is complicated.

When the measurement $\mathcal{M}$ is performed on the quantum state $E^{ins}(\alpha_0|0_L\rangle+\alpha_1|1_L\rangle+\alpha_2|2_L\rangle)\in S(\mathbb{C}^{3\otimes7})$, the probability of obtaining the outcome $k\in[21]$ is
\begin{align*}
p(k)=
\begin{cases}
\displaystyle p_j\left(p^+(1)+\frac19p^+(2)+\frac19p^+(3)\right)&k=7j+1\medskip\\
\displaystyle p_j\left(\frac89p^+(2)+\frac1{18}p^+(3)\right)&k=7j+2\medskip\\
\displaystyle p_j\left(\frac{5}{6}p^+(3)+\frac{2}{15}p^+(4)+\frac{2}{15}p^+(5)\right)&k=7j+3\medskip\\
\displaystyle p_j\left(\frac{13}{15}p^+(4)+\frac{3}{65}p^+(5)\right)&k=7j+4\medskip\\
\displaystyle p_j\left(\frac{32}{39}p^+(5)+\frac{13}{96}p^+(6)+\frac{13}{96}p^+(7)\right)&k=7j+5\medskip\\
\displaystyle p_j\left(\frac{83}{96}p^+(6)+\frac{361}{7968}p^+(7)\right)&k=7j+6\medskip\\
\displaystyle p_j\left(\frac{68}{83}p^+(7)\right)&k=7j+7
\end{cases}
\end{align*}
for $j\in\{0,1,2\}$ from Equation (\ref{eq3}).
Then, the unitary operator $U_k$ corresponding to each outcome $k\in[21]$ is applied and the $1$st through $6$th particles are deleted.
Thus, the original quantum state $\alpha_0|0\rangle+\alpha_1|1\rangle+\alpha_2|2\rangle)\in S(\mathbb{C}^3)$ can be obtained.

For example, if $\sigma=\frac12|0\rangle\langle0|+\frac13|1\rangle\langle1|+\frac16|2\rangle\langle2|\in S(\mathbb{C}^3)$ is inserted in the $4$th position, that is, if we consider the single insertion error where $p^+(4)=1$, $p^+(p)=0$ for any $p\neq4$, and $p_1=1/2$, $p_2=1/3$, $p_3=1/6$, the error-correction process is as shown in Figure \ref{fig1}.

\begin{figure}[htbp]
\begin{center}
\includegraphics[scale=0.44]{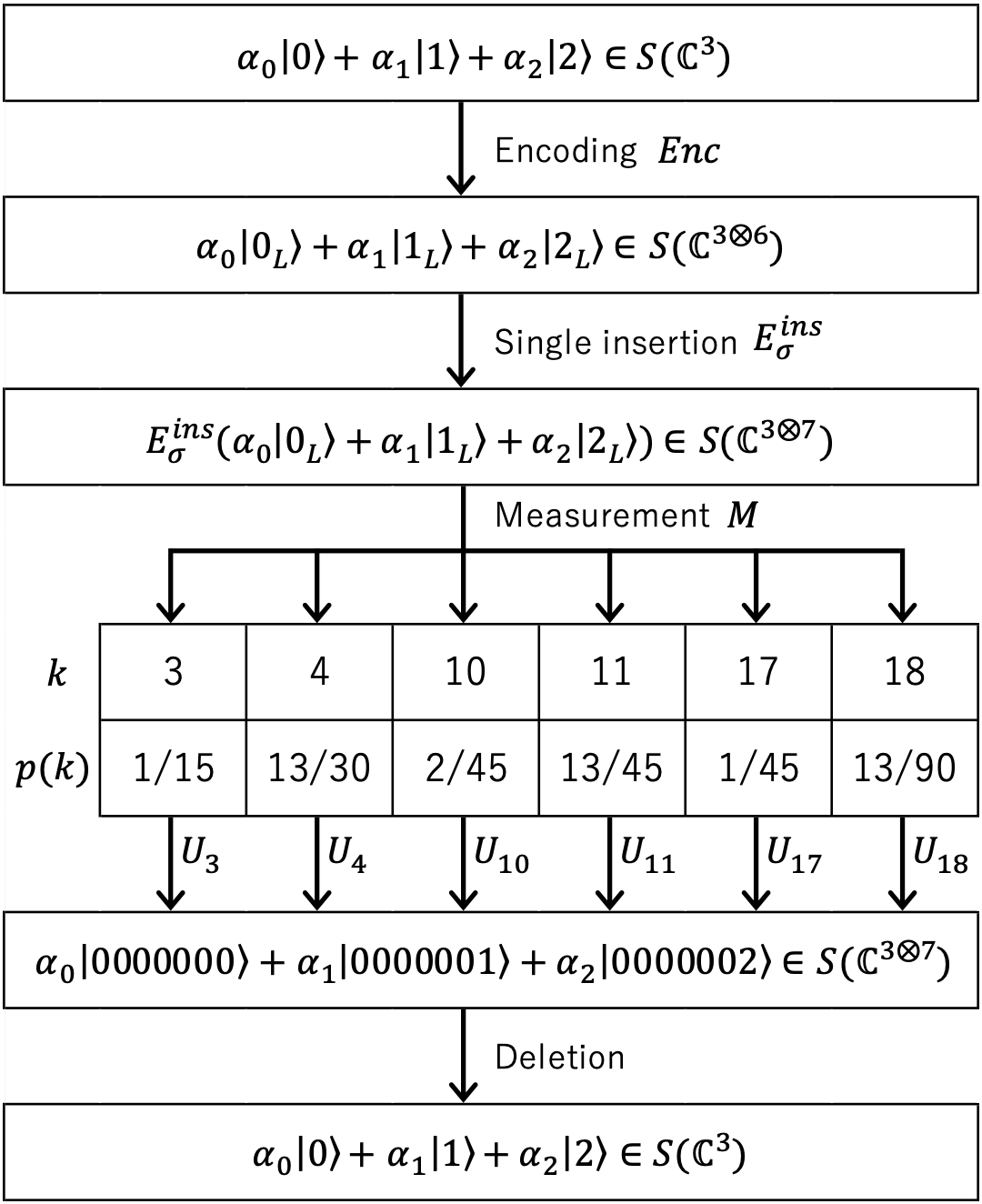}
\end{center}
\caption{Error-correction process for the 6-qudit insertion code when $\sigma=\frac12|0\rangle\langle0|+\frac13|1\rangle\langle1|+\frac16|2\rangle\langle2|\in S(\mathbb{C}^3)$ is inserted in the 4th position}  \label{fig1}
\end{figure}

\section{Conclusion}\label{sec7}
In this paper, we proved that KL condition \cite{knill1997} can be used as a necessary and sufficient condition for correcting quantum errors with changing number of particles.
By using KL condition, we showed that the error-correctabilities of single deletion and single insertion are equivalent.
Furthermore, the error-correctability conditions for single deletion given by Nakayama-Hagiwara\cite{Nakayama20202} and for single insertion given by Shibayama-Hagiwara\cite{9834635} were improved to necessary and sufficient level.
We also constructed a new single qudit insertion/deletion code by giving an example that satisfies the conditions, and provided a decoding algorithm for the code.

\bibliography{main}

\end{document}